\documentclass[11pt]{article}
\usepackage[margin=1in]{geometry}
\usepackage{amsmath,amssymb,amsthm,mathtools}
\usepackage{microtype}
\usepackage{enumitem}
\usepackage{tikz}
\usetikzlibrary{arrows.meta}
\usepackage[colorlinks=true,linkcolor=blue,citecolor=blue,urlcolor=blue]{hyperref}
\usepackage[authoryear]{natbib}
\newtheorem{lemma}{Lemma}
\newtheorem{theorem}{Theorem}

\theoremstyle{definition}
\newtheorem{result}{Result}
\theoremstyle{plain}
\newcommand{\E}{\mathbb{E}}
\newcommand{\D}{\mathcal{D}}
\newcommand{\OPT}{\mathrm{OPT}}
\newcommand{\pos}[1]{\left(#1\right)^+}

\title{A $1/2$-Approximation for Pandora's Problem with Deadlines}
\author{%
\begin{tabular}{cc}
\begin{minipage}[t]{0.4\textwidth}
    \centering
    Shenyu Qin\\
    \small Shanghai Jiao Tong University\\
    \small\href{mailto:qinshenyu29@gmail.com}{qinshenyu29@gmail.com}
\end{minipage}
&
\begin{minipage}[t]{0.4\textwidth}
    \centering
    Chihao Zhang\\
    \small Shanghai Jiao Tong University\\
    \small\href{mailto:chihao@sjtu.edu.cn}{chihao@sjtu.edu.cn}
\end{minipage}
\\[7ex]
\begin{minipage}[t]{0.4\textwidth}
    \centering
    Yuhao Zhang\\
    \small Shanghai Jiao Tong University\\
    \small\href{mailto:zhang_yuhao@sjtu.edu.cn}{zhang\_yuhao@sjtu.edu.cn}
\end{minipage}
&
\begin{minipage}[t]{0.4\textwidth}
    \centering
    Zixuan Zhu\\
    \small University of Hong Kong\\
    \small\href{mailto:zhuzixua@connect.hku.hk}{zhuzixua@connect.hku.hk}
\end{minipage}
\end{tabular}%
}
\date{}

\begin{document}
\maketitle

\begin{abstract}
Pandora's problem is a fundamental model of sequential search in which opening each box incurs a cost.
We study the variant with deadlines, where each box has a publicly known deadline for both opening and selection.
We prove that a simple threshold strategy achieves a $1/2$-approximation to the optimal expected utility, improving the previous guarantee arbitrarily close to $1/4$ due to Amanatidis et al. (2025).
\end{abstract}
\section{Introduction}

Pandora's problem was first introduced by \citet{Weitzman/1979/original} in the economics literature and models sequential search among costly alternatives.
It considers a set of Pandora's boxes.
Each box contains an uncertain value drawn from a known distribution, and opening a box reveals its value at a known cost. 
In each step, the decision-maker decides whether to open another box or to stop and take the maximum value from an opened box.
The goal is to maximize the value of the taken box minus the total inspection cost. 
In the classical setting, all boxes remain available throughout the search, and Weitzman's reservation-value rule gives an optimal strategy: it fixes an order for opening boxes in advance, but adapts the stopping decision to the values observed along the way. 
This model characterizes many real-life scenarios such as screening job candidates, comparing products, or conducting costly tests, where each inspection provides information but consumes resources.

In many selection problems, however, a decision-maker must pay to learn the value of an opportunity, and each opportunity remains available only for a limited time.
Continuing the search therefore entails both further evaluation costs and the risk of losing promising opportunities already identified.
For example, consider an employer interviewing candidates for a single position.
Each interview incurs a cost and reveals information about a candidate's suitability, but a candidate may accept another offer unless the employer makes a hiring decision by a specified date.
The employer may wish to interview more candidates before making an offer, yet continuing the search can cause a promising candidate it has already interviewed to become unavailable.
A deadline therefore limits both the opportunity to learn about an alternative and the time available to choose it.
Motivated by such settings, \citet{DBLP:conf/aaai/BergerEFF24} introduced Pandora's problem with deadlines, where each box $i$ has a publicly known deadline $d_i$.
Both opening and commitment must respect this deadline: box $i$ may be opened in round $t$ only if $t\leq d_i$, and the strategy may stop and commit to a previously opened box $i$ only by round $d_i$.
The goal remains to maximize the expected value of the selected box minus the total opening cost.

\citet{DBLP:conf/aaai/BergerEFF24} gave a partially adaptive, immediate-commitment algorithm with an approximation ratio of approximately $0.158$, later improved to arbitrarily close to $1/4$ by \citet{DBLP:journals/ai/AmanatidisBEFFRT25}.
Their common framework \citep{DBLP:conf/aaai/BergerEFF24,DBLP:journals/ai/AmanatidisBEFFRT25} connects the deadline model to a prophet inequality problem \citep{KrengelSucheston1977,KrengelSucheston1978} through a sequence of reductions, using reservation-value identities (no loss), an adaptivity-gap bound (factor $1/2$), and submodular maximization as \emph{black-box} components.
A PTAS exploiting the special structure of the submodular objective gives a factor of $1/(1+\delta)$ for any constant $\delta>0$.
It then invokes the classical prophet inequality (factor $1/2$).
Multiplying these factors yields $1/[4(1+\delta)]$, arbitrarily close to $1/4$.

We give a simple $1/2$-approximation to this problem by directly combining the reservation-value idea from Weitzman's rule with a \emph{white-box threshold--surplus analysis} of the classical prophet inequality.
We introduce the \emph{reservation-capped threshold surplus} function as a direct bridge between the deadline setting and the classical prophet inequality.
This connection avoids the intermediate adaptivity-gap and submodular-maximization losses, preserving the $1/2$ guarantee.
We explain these ideas and compare our approach with the previous framework in detail at the beginning of Section~\ref{sec:0.5}.

\begin{result}
\label{res:half-approximation}
We prove a $1/2$-approximation guarantee for Pandora's problem with deadlines, improving the previous guarantee arbitrarily close to $1/4$ (Theorem~\ref{thm:half-approximation}).
\end{result}

Beyond its approximation guarantee, our algorithm has two properties that are useful in practice: immediate commitment and partial adaptivity.

First, our algorithm uses \emph{immediate commitment}: upon opening a box, it either accepts its value and stops or rejects the box irrevocably.
This restriction is natural in applications where evaluating an opportunity must be followed by an immediate decision.
In the interview example, a hiring process may require the employer to make a hiring decision immediately after each interview.
Even when a delayed decision is allowed, the risk that a candidate accepts another offer may increase over time, making it difficult to specify a precise deadline for commitment.
Immediate commitment avoids relying on such a deadline and makes our algorithm applicable to these settings.
In our analysis, however, $\OPT$ may defer its commitment decision until stopping.
Deadlines constrain only its openings: it may select any previously opened box, even if that box's deadline has passed.
Our algorithm achieves the $1/2$ guarantee even against this more permissive benchmark.\footnote{The prior approximation guarantees \citep{DBLP:conf/aaai/BergerEFF24,DBLP:journals/ai/AmanatidisBEFFRT25} also hold for immediate-commitment algorithms against this benchmark.}
Moreover, Theorem~\ref{thm:immediate-commitment-tightness} shows that $1/2$ is the best possible approximation ratio for immediate-commitment algorithms against this benchmark.

Second, our algorithm is \emph{partially adaptive}: it fixes a set of boxes and their opening order in advance, but decides when to stop based on the observed values.
A \emph{fully adaptive} algorithm may also use these observations to choose which box to open next.
Partial adaptivity is useful when evaluations must be scheduled ahead of time.
For example, an employer may need to choose which candidates to interview and announce their interview order in advance.
Fixing this schedule avoids repeatedly reorganizing the interview process as new information arrives, while still allowing the employer to stop once it finds a suitable candidate.
Again, our analysis compares the algorithm with the fully adaptive optimum.
We also investigate the theoretical limits of partial adaptivity: Result~\ref{res:partial-adaptivity-limit} shows that this restriction can cause a strict loss relative to full adaptivity, even when all boxes have the same deadline.
In the classical Pandora's problem, Weitzman's rule is partially adaptive and achieves the fully adaptive optimum \citep{Weitzman/1979/original}.
Our result therefore establishes a separation between the two models: partial adaptivity suffices for optimality in the classical setting, but not in general under deadlines.

\begin{result}
\label{res:partial-adaptivity-limit}
We show that even with uniform deadlines, partially adaptive algorithms cannot guarantee an approximation ratio better than approximately $0.778$ relative to the fully adaptive optimum (Theorem~\ref{thm:partial-adaptivity-limit}).
\end{result}

\subsection{Other Related Work}

Beyond the related work discussed above, Pandora's problem has inspired a rich literature on a broad range of variants and extensions.

Most of the variations focus on opening or selection constraints, reward distributions, or the cost structure.
\citet{DBLP:conf/soda/Singla18} introduces the Price of Information framework, extending Pandora's problem to cost-minimization and utility-maximization settings with combinatorial selection or probing constraints.
Pandora's Problem with Nonobligatory Inspection has been studied in several lines of work \citep*[e.g.,][]{DBLP:journals/corr/abs-0804-1724,BeyhaghiK:EC:2019,DBLP:journals/jet/Doval18,ZivL:corr:2024}, where we can take a box without opening it and receive its expected value.
\citet{DBLP:conf/stoc/FuLL23} prove that the problem of finding the optimal strategy is NP-hard, and a PTAS is given in \citep{DBLP:conf/stoc/FuLL23,DBLP:conf/stoc/BeyhaghiC23}.
\citet{BoodaghiansFLL:Math.Oper:2023} impose precedence constraints on which boxes may be opened. 
Correlated Pandora's problem allows the box values to be arbitrarily correlated \citep{ChawlaGTTZ:FOCS:2020,DBLP:conf/approx/0001GMT23,GergatsouliT:NeurIPS:2023,DBLP:conf/focs/BansalHZ25}.
\citet{BergerEFF:EC:2023} consider non-additive, set-dependent inspection costs.
\citet{aouad2020pandora} investigate sequential inspections, where a box can first be inspected partially at a lower cost and then fully opened at a higher cost.

There are also models that consider other forms of exploration-exploitation tradeoffs.
Committed Pandora's Problem requires an immediate accept-or-reject decision after opening each box.
\citet{FuLX:ICALP:2018} gave a PTAS, and \citet{DBLP:conf/sigecom/Segev021} an EPTAS, relative to the optimal fully adaptive policy subject to the same commitment rule.
\citet{DBLP:conf/wine/AmanatidisFRT24} introduce the Pandora's Box Problem Over Time framework.
They study a temporal model in which rewards may decay after opening and opening costs vary over time.
\citet{EsfandiariHLM:AAAI:2019} consider boxes that arrive online, with decisions about inspection and selection made as they appear. 
\citet{DBLP:conf/icml/GergatsouliT22} study online learning of search orders for Pandora's problem and Min Sum Set Cover, including settings with bandit feedback. 
In market design, \citet{DBLP:conf/sigecom/KleinbergWW16} show how a descending-price auction can coordinate buyers' costly investigations. For a broader overview of Pandora's problem, its variants, and applications, we refer the reader to a recent survey by \citet{DBLP:journals/sigecom/BeyhaghiC23}.
\section{Preliminaries}
\label{sec:pre}
We write $[n]$ for $\{1,2,...,n\}$ with the convention that $[0]$ is the empty set. 
Let $x^+$ denote the
function $\max\{x,0\}$.
\subsection{Pandora's Problem with Deadlines}
Consider a set of boxes $B=[n]$.
Each box $i$ has a nonnegative random value $v_i$, an opening cost $c_i\geq 0$, and an integer deadline $d_i\geq 1$.
The costs and deadlines are known in advance, and the values are drawn independently from known distributions $\mathcal{D}_i$ with finite expectations.
Opening box $i$ incurs cost $c_i$ and reveals $v_i$.

In round $t=1,2,\ldots$, a strategy may open one previously unopened box $i$, provided that $t\leq d_i$.
After the opening, it may continue to the next round or irrevocably commit to an available box and stop.
It may also stop without committing to any box. 
A strategy therefore commits to at most one previously opened box.

\paragraph{Commitment.}
We distinguish three rules governing which opened boxes remain available for commitment.
Opening deadlines apply under all three rules.
\begin{itemize}
    \item \emph{Immediate commitment.} Immediately after opening a box, the strategy must either commit to it and stop or discard it irrevocably.
    \item \emph{Deferred-to-deadline commitment.} An opened box remains available until its deadline. The strategy may commit to box $i$ in round $t$ only if $t\leq d_i$.
    \item \emph{Deferred-until-stopping commitment.} Every opened box remains available until the strategy stops. It may commit to any previously opened box, even if that box's deadline has passed.
\end{itemize}
These rules are increasingly permissive, with immediate commitment imposing the strongest restriction.

For a strategy $\pi$, let $O(\pi)$ be its random set of opened boxes and let $C(\pi)$ be the box to which it commits.
Set $C(\pi)=0$ if it makes no commitment, with $v_0=0$.
Its expected net utility is
\[
 u(\pi)=\E\!\left[v_{C(\pi)}-\sum_{i\in O(\pi)}c_i\right].
\]
In particular, the reward is the value of the committed box, and every opened box incurs its cost.

\paragraph{Adaptivity.}
We distinguish two levels of adaptivity in the opening and stopping decisions.
\begin{itemize}
    \item A \emph{fully adaptive} strategy may choose the next box to open and decide when to stop based on all values observed so far.
    \item A \emph{partially adaptive} strategy fixes an ordered list of boxes before observing any values and opens a prefix of this list. Its stopping time may depend on the observed values.
\end{itemize}
These definitions concern the opening sequence and stopping time; the choice of which box to commit to may depend on the observed values, subject to the commitment rule.

\paragraph{Algorithm and benchmark.}
Our algorithm in Section~\ref{sec:0.5} is partially adaptive and uses immediate commitment, the most restrictive commitment rule.
For its analysis, we use the most permissive benchmark: a fully adaptive strategy with deferred-until-stopping commitment.
We write
\[
 \OPT=\sup_{\pi}u(\pi),
\]
where the supremum is over all such strategies that respect the opening deadlines.
This benchmark may commit to an expired box, provided that the box was opened by its deadline.
A guarantee against $\OPT$ therefore also holds against the fully adaptive optimum under deferred-to-deadline or immediate commitment.

\subsection{Weitzman's Rule}
Under either deferred-to-deadline or deferred-until-stopping commitment, setting all deadlines to $+\infty$ yields the classical Pandora's problem, for which Weitzman's rule is optimal \citep{Weitzman/1979/original}.
For each box $i$, its reservation value $r_i$ is defined such that
\[
c_i = \mathbb{E}_{v_i\sim\mathcal{D}_i}\bigl[(v_i-r_i)^+\bigr].
\]
When $c_i=0$, we take $r_i=\operatorname{ess\,sup}v_i$, allowing $r_i=+\infty$.
The rule opens an unopened box with the largest reservation value and stops when the largest value observed so far is at least the largest reservation value among the unopened boxes. 
It then commits to the opened box with the highest realized value.
Its opening order is fixed by the reservation values, while its stopping time depends on the observed values, so the rule is partially adaptive.

\paragraph{Failure under deadlines.}

Weitzman's rule accounts for opening costs and potential rewards, but its reservation-value order does not account for deadlines.
With deadlines, opening one box can make another unavailable, so the opening order affects not only costs and information but also which options remain feasible.

We give an instance in which Weitzman's rule, even with deferred-until-stopping commitment, performs arbitrarily poorly relative to the optimal immediate-commitment strategy.
Suppose there are two boxes with deadlines $d_1=1$ and $d_2=2$. 
Their costs are $c_1=1-\frac{1}{N}$ and $c_2=1-\frac{1}{N}-\frac{1}{N^2}$, respectively, where $N$ is a large real number. 
The first box has value $v_1=N$ with probability $\frac{1}{N}$ and value $v_1=1$ otherwise. 
The second box has value $v_2=N$ with probability $\frac{1}{N}$ and value $v_2=0$ otherwise.

By definition, the reservation values are $r_1=1$ and $r_2=1+\frac{1}{N}$. 
Since $r_1<r_2$, Weitzman's rule opens the second box first. 
The first box can no longer be opened after round 1, so the rule can only commit to the second box, even under deferred-until-stopping commitment.
Its expected utility is
\[
N\cdot\frac{1}{N}
-\left(1-\frac{1}{N}-\frac{1}{N^2}\right)
=\frac{1}{N}+\frac{1}{N^2}.
\]
In contrast, opening the first box in round 1 and committing to it immediately gives expected utility
\[
N\cdot\frac{1}{N}
+1\cdot\left(1-\frac{1}{N}\right)
-\left(1-\frac{1}{N}\right)
=1.
\]
The optimal immediate-commitment strategy therefore has expected utility at least $1$. Thus, even with deferred-until-stopping commitment, Weitzman's rule achieves a ratio of at most $1/N+1/N^2$ against this optimum, which tends to zero as $N\to\infty$.

The example isolates a failure of the reservation-value opening order under deadlines.
Box 2's slightly larger reservation value reflects its rare high payoff and lower opening cost, making it attractive under Weitzman's rule.
Opening it first leaves no opportunity to open box 1 by its deadline; when box 2 yields zero, there is no fallback.
Deferred-until-stopping commitment preserves the option to choose among boxes already opened, but it cannot recover an unopened box whose deadline has passed.
Opening box 1 first and committing immediately avoids this loss.

\section{Our Algorithm}
\label{sec:0.5}
This section presents our main result: a $1/2$-approximate partially adaptive algorithm with immediate commitment.
Our approach builds a direct connection between the deadline setting and classical prophet inequalities.
The algorithm has two stages. First, it selects a feasible set of boxes and fixes an opening order that respects their deadlines.
Second, it applies an immediate-commitment policy to the selected boxes.
The algorithm is partially adaptive: its opening order is fixed in advance, while its stopping time depends on the observed values.
The guarantee holds against $\OPT$, the fully adaptive benchmark with deferred-until-stopping commitment.

The key to this two-stage framework is to determine what makes a feasible set $S$ good and how much of its value can be preserved under immediate commitment.
The prior algorithms \citep{DBLP:conf/aaai/BergerEFF24,DBLP:journals/ai/AmanatidisBEFFRT25} follow the same framework.
Their approach first uses the reservation-value property underlying Weitzman's rule as a \emph{black-box} tool to associate each feasible set $S$ with the benchmark $B(S)=\E[\max_{i\in S}\min\{v_i,r_i\}]$, where $r_i$ is box $i$'s reservation value.
It then reduces the immediate-commitment step on $S$ to a prophet inequality on the capped values and invokes the classical $1/2$ guarantee as a black box, obtaining expected net utility at least $B(S)/2$.
To find a feasible set whose benchmark is large relative to $\OPT$, their framework uses a sequence of reductions: a relaxation to stochastic probing, an adaptivity-gap bound relating the adaptive probing optimum to the best fixed feasible set, and submodular maximization to approximately optimize $B(S)$.
The adaptivity-gap bound guarantees that the best feasible benchmark is at least $\OPT/2$.
A PTAS exploiting the special structure of $B(S)$ contributes a further factor of $1/(1+\delta)$ for any constant $\delta>0$.
Together, these three factors give an approximation ratio of $1/[4(1+\delta)]$, arbitrarily close to $1/4$.

\paragraph{Reservation-capped threshold surplus: a simple, direct bridge.} We obtain a more direct connection through a \emph{white-box} analysis that combines Weitzman's reservation-value idea with the \emph{threshold--surplus decomposition} underlying the classical prophet inequality \cite{SamuelCahn1984}.
The resulting bridge is the \emph{reservation-capped threshold surplus}, defined by
\[
 w_i(t)=\E[(\min\{v_i,r_i\}-t)^+].
\]
These surpluses define the set benchmark $\mathcal{B}_t(S)=t+\sum_{i\in S}w_i(t)$.
For every fixed threshold, the maximum benchmark value over feasible sets directly upper-bounds $\OPT$.
At a balancing threshold, a simple threshold algorithm with immediate commitment achieves at least half of this value.
We design a polynomial-time dynamic program to perform the maximization exactly for each fixed threshold.
As illustrated in Figure~\ref{fig:framework-comparison}, this direct connection avoids the intermediate adaptivity-gap and submodular-maximization losses, preserving the $1/2$-approximation guarantee against the adaptive optimum.

\begin{figure}[!t]
\centering
\begin{tikzpicture}[
    font=\footnotesize,
    stage/.style={draw, line width=0.5pt, align=center,
        minimum height=1.7cm, inner sep=3pt},
    endpoint/.style={stage, font=\footnotesize\bfseries,
        minimum width=1.05cm, minimum height=0.85cm,
        inner xsep=5pt, inner ysep=4pt},
    connector/.style={line width=0.6pt},
    flow/.style={connector, -{Stealth[length=1.6mm,width=1.2mm]}}
]
\node[endpoint] (start) at (0.45,-2.225) {$\OPT$};
\node[endpoint] (finish) at (15.8,-2.225) {$\mathrm{ALG}$};
\node[font=\small\bfseries] at (8.1,2.25)
    {Previous: black-box reductions};
\node[font=\small\bfseries] at (8.1,-2.20)
    {Ours: a direct connection};

\draw[densely dashed, line width=0.6pt]
    (1.45,1.9) rectangle (10.65,-1.9);
\draw[densely dashed, line width=0.6pt]
    (1.45,-2.55) rectangle (10.65,-6.35);
\node[font=\small\bfseries] at (6.05,1.55) {1. Select a good set};
\node[font=\small\bfseries] at (6.05,-2.90) {1. Select a good set};

\node[align=center, text width=8.7cm] at (6.05,0.95) {
    $B(S)$: direct use of reservation values from Weitzman's rule
};
\node[align=center, text width=8.7cm] at (6.05,-3.50) {
    $\mathcal{B}_t(S)$: reservation values from Weitzman's rule\\
    $+$ threshold--surplus decomposition from prophet inequalities
};

\node[stage, text width=1.55cm] (reservation) at (2.5,-0.40) {
    Weitzman\\reduction
};
\node[stage, text width=1.55cm] (probing) at (4.55,-0.40) {
    Stochastic-\\probing\\relaxation
};
\node[stage, text width=2cm] (adaptivity) at (6.9,-0.40) {
    Adaptivity gap\\[0.2em]
    $\max_S B(S)$\\$\geq\OPT/2$
};
\node[stage, text width=2cm] (submodular) at (9.4,-0.40) {
    Submodular\\maximization\\(PTAS)\\[0.2em]
    Factor $1/(1+\delta)$
};
\node at (6.05,-1.55) {$B(S^*)\geq\frac{1}{2(1+\delta)}\,\OPT$};

\node[stage, text width=3.9cm] (directbound) at (3.85,-4.85) {
    Direct bound on $\OPT$\\[0.2em]
    $\min_t\max_S\mathcal{B}_t(S)\geq\OPT$
};
\node[stage, text width=3.9cm] (selection) at (8.45,-4.85) {
    Exact polynomial-time DP\\
    For each fixed $t$, find\\[0.2em]
    $S^*\in\arg\max_S\mathcal{B}_t(S)$
};
\node at (6.05,-6.00) {$\mathcal{B}_t(S^*)\geq\OPT$};

\node[stage, text width=3.4cm, minimum height=2.85cm]
    (previouscommitment) at (12.75,0) {
    \textbf{2. Prophet inequality}\\[0.4em]
    \emph{Black-box}\\[0.5em]
    $u(\mathrm{ALG}_{\mathrm{prev}})\geq\tfrac12 B(S^*)$\\[0.5em]
    \textbf{Utility $\geq\frac{\OPT}{4(1+\delta)}$}
};
\node[stage, text width=3.4cm, minimum height=2.85cm]
    (ourcommitment) at (12.75,-4.45) {
    \textbf{2. Prophet inequality}\\[0.4em]
    \emph{White-box}\\
    \emph{threshold--surplus proof}\\[0.4em]
    $u(\mathrm{ALG})\geq\tfrac12\mathcal{B}_{t^*}(S^*)$\\[0.4em]
    \textbf{Utility $\geq\OPT/2$}
};

\draw[connector] (start.east) -- (1.2,-2.225);
\draw[flow] (1.2,-2.225) |- (1.45,0);
\draw[flow] (1.2,-2.225) |- (1.45,-4.45);
\draw[flow] (reservation.east) -- (probing.west);
\draw[flow] (probing.east) -- (adaptivity.west);
\draw[flow] (adaptivity.east) -- (submodular.west);
\draw[flow] (directbound.east) -- (selection.west);
\draw[flow] (10.65,0) -- (previouscommitment.west);
\draw[flow] (10.65,-4.45) -- (ourcommitment.west);
\draw[connector] (previouscommitment.east) -| (14.95,-2.225);
\draw[connector] (ourcommitment.east) -| (14.95,-2.225);
\draw[flow] (14.95,-2.225) -- (finish.west);
\end{tikzpicture}
\caption{Comparison of the common black-box framework \citep{DBLP:conf/aaai/BergerEFF24,DBLP:journals/ai/AmanatidisBEFFRT25} and our white-box approach. The PTAS exploits the objective's special structure.}
\label{fig:framework-comparison}
\end{figure}

\subsection{Reservation-capped Threshold Surplus}
We first define the reservation-capped threshold surplus and its associated benchmark for a feasible set of boxes. We show that maximizing this benchmark over feasible sets upper-bounds $\OPT$, and give a dynamic program that finds a maximizing set for each fixed threshold.

For each box $i$, let $y_i=\min\{v_i,r_i\}$ be its realized value capped at its reservation value $r_i$.
The excess above this cap is $o_i=v_i-y_i=(v_i-r_i)^+$. 
For a threshold $t\geq 0$, let $w_i(t)=\mathbb{E}[(y_i-t)^+]$ denote the \emph{reservation-capped threshold surplus} of box $i$, namely the expected amount by which its capped reward exceeds $t$.

A set is \emph{deadline-feasible} if its boxes can be opened legally, one per round, in some order.
Let $\mathcal{D}$ be the family of deadline-feasible sets, including the empty set.

For $S\in\D$ and $t\geq 0$, define
\[
\mathcal{B}_t(S)=t+\sum_{i\in S}w_i(t),
\qquad
W(t)=\max_{S\in\D}\sum_{i\in S}w_i(t).
\]
Thus, $\max_{S\in\D}\mathcal{B}_t(S)=t+W(t)$.

To relate this benchmark to $\OPT$, we use the reservation-value identity to bound net utility by the expected capped value of the selected box.
The selected capped value is at most $t$ plus the total surplus above $t$ of all opened boxes.
Since the decision to open each box is independent of its value and the opened set is always deadline-feasible, the expected total surplus is at most $W(t)$.

\begin{lemma}[Adaptive-surplus upper bound]
For every adaptive strategy $\pi$ that respects the opening deadlines and every $t\geq 0$,
\[
 u(\pi)\leq t+W(t)=\max_{S\in\D}\mathcal{B}_t(S).
\]
\end{lemma}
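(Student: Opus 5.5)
The plan has three steps. First, use the reservation-value identity to replace opening costs by capped rewards. Second, bound the capped value of the selected box by the threshold plus the surplus over the opened boxes. Third, use deadline-feasibility of the opened set to bound the expected surplus by $W(t)$.

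\textbf{Step 1: remove the costs.} Fix an adaptive strategy $\pi$ and a threshold $t\ge 0$. Let $I_i$ be the indicator that box $i$ is opened. The decision to open $i$ depends only on values of other, previously opened boxes, so $I_i$ is independent of $v_i$. Hence
\[
\E[c_iI_i]=\E[I_i\,\E[o_i]]=\E[I_i o_i],
\qquad\text{since } c_i=\E[(v_i-r_i)^+]=\E[o_i].
\]
This step is the standard Weitzman amortization. It requires $\E[o_i]$ to be finite, which holds because expectations are finite; boxes with $c_i=0$ and $r_i=+\infty$ give $o_i=0$. Therefore
\[
u(\pi)=\E\Bigl[v_{C(\pi)}-\sum_{i\in O(\pi)} o_i\Bigr].
\]
The selected box is opened and every $o_i\ge 0$. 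Dropping all excess terms except that of the selected box gives
\[
u(\pi)\le \E\bigl[v_{C}-o_{C}\bigr]=\E[y_{C}],
\]
with the convention $y_0=0$. This holds under deferred-until-stopping commitment, because only the fact that $C\in O(\pi)\cup\{0\}$ is used.

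\textbf{Step 2: threshold--surplus decomposition.} Pointwise,
\[
y_C\le t+(y_C-t)^+\le t+\sum_{i\in O(\pi)}(y_i-t)^+ .
\]
The same independence argument as in Step 1 applies to the surplus terms, giving
\[
\E\Bigl[\sum_i I_i\,(y_i-t)^+\Bigr]=\sum_i \E[I_i]\,w_i(t)=\E\Bigl[\sum_{i\in O(\pi)} w_i(t)\Bigr].
\]
The opening decision for $i$ is made before $v_i$ is revealed, but it may depend on other values. The independence must therefore be justified carefully. I will argue this via the filtration: at the moment $\pi$ decides to open $i$, the event is measurable with respect to the values observed so far, all of which are independent of $v_i$. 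Stopping rules and randomization are handled the same way. I expect this to be the main technical point, although it is routine.

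\textbf{Step 3: feasibility.} On every sample path, $O(\pi)$ is opened one box per round with each box opened by its deadline, so $O(\pi)\in\D$. Hence
\[
\sum_{i\in O(\pi)} w_i(t)\le W(t)
\]
pointwise. Combining the three steps gives $u(\pi)\le t+W(t)=\max_{S\in\D}\mathcal{B}_t(S)$.
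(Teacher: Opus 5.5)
Your proof is correct and follows essentially the same route as the paper: the reservation-value identity $\E[I_i o_i]=\E[I_i]\,c_i$ (from the independence of $I_i$ and $v_i$), the pointwise bound $y_C\le t+\sum_{i\in O(\pi)}(y_i-t)^+$, and deadline-feasibility of the opened set. The only difference is the order of the steps. You first reduce the utility to $\E[y_{C}]$ by dropping the nonnegative excesses of unselected boxes. The paper instead keeps all the terms $A_i o_i\le I_i o_i$ in a single chain of inequalities and cancels them against the opening costs.
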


\begin{proof}
Fix a threshold $t\geq 0$ and an adaptive strategy $\pi$ that respects the opening deadlines.
Let $I_i$ indicate whether $\pi$ opens box $i$.
Let $A_i$ indicate whether $\pi$ commits to box $i$ at stopping.
Since $\pi$ commits to at most one opened box, we have $0\leq A_i\leq I_i$ and $\sum_i A_i\leq 1$.

Since
$y_i \leq t + \pos{y_i-t}$, we have
\begin{equation}
    \label{eq:decomposite_y}
 \sum_i A_i \cdot y_i\leq t+\sum_i I_i \cdot \pos{y_i-t}.
\end{equation}

Using $y_i\leq t+(y_i-t)^+$, $A_i\leq I_i$, and $\sum_i A_i\leq1$, we bound the utility of $\pi$ as follows.
\begin{align}
u(\pi)
&= \mathbb{E}\!\left[\sum_i A_i \cdot v_i\right]
   - \sum_i \mathbb{E}[I_i \cdot c_i] \notag \\
&= \mathbb{E}\!\left[\sum_i A_i \cdot y_i + A_i \cdot o_i\right]
   - \sum_i \mathbb{E}[I_i \cdot c_i] \tag{by $v_i = y_i + o_i$} \\
&\leq
   \mathbb{E}\!\left[
      t \cdot \sum_i A_i
      + \sum_i A_i \cdot (y_i-t)^+
      + \sum_i A_i \cdot o_i
   \right]
   - \sum_i \mathbb{E}[I_i \cdot c_i] \notag \\
&\leq
   t
   + \sum_i \mathbb{E}\!\left[I_i \cdot (y_i-t)^+\right]
   + \sum_i \mathbb{E}[I_i \cdot o_i]
   - \sum_i \mathbb{E}[I_i \cdot c_i].\notag
\end{align}

For each box $i$, the decision to open it is made before its value is observed.
Thus, the event $\{I_i=1\}$ depends only on observations of other boxes and the strategy's internal randomness, so it is independent of $v_i$.
Both $y_i$ and $o_i$ are deterministic functions of $v_i$ (with $r_i$ fixed), so $I_i$ is also independent of $y_i$ and $o_i$. 
By the definition of reservation values, we have $\mathbb{E}[o_i] = c_i$, hence the term $\sum_i \mathbb{E}[I_i \cdot o_i]$ cancels with $\sum_i \mathbb{E}[I_i \cdot c_i]$.
Similarly, the definition of $w_i(t)$ gives $\mathbb{E}\!\left[I_i \cdot (y_i-t)^+\right] = \mathbb{E}\!\left[I_i \cdot w_i(t)\right]$.
Substituting these identities into the preceding bound yields
\[
u(\pi)\leq t+\mathbb{E}\!\left[\sum_i I_i \cdot w_i(t)\right] \leq t + W(t),
\]
where the last inequality holds because the opened boxes form a deadline-feasible set in every realization.
\end{proof}

Taking the supremum over the benchmark strategies and then minimizing over $t$ gives $\OPT\leq\min_{t\geq 0}\max_{S\in\D}\mathcal{B}_t(S)$.

To evaluate $t+W(t)$ for a fixed threshold $t$, we maximize the total weight of a deadline-feasible set using weights $w_i(t)$.
The following lemma characterizes feasibility and gives a dynamic program for this weighted selection problem.

\begin{lemma}[Feasibility test and weighted dynamic program]
Let $S=\{i_1,\ldots,i_k\}$ be ordered by nondecreasing deadlines. Then
$S\in\D$ if and only if $d_{i_j}\geq j$ for every $j=1,\ldots,k$. Given
nonnegative weights $a_i$, a set maximizing $\sum_{i\in S}a_i$ over
$S\in\D$ can be found with $O(n^2)$ state transitions.
\end{lemma}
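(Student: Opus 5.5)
The feasibility characterization is a standard scheduling exchange argument (earliest-deadline-first), and I will prove the two directions separately. For necessity, suppose $S\in\D$ and fix any legal opening order, where the box opened in round $s$ has deadline at least $s$. For each $j$, consider the $k-j+1$ boxes $i_j,\ldots,i_k$ of $S$ with the largest deadlines in the sorted order. They occupy $k-j+1$ distinct rounds among $1,\ldots,k$, so at least one of them is opened in a round $s\geq j$. That box has deadline at least $s\geq j$, and every deadline among $i_j,\ldots,i_k$ is at least $d_{i_j}$ only in the wrong direction, so instead I will argue with the smallest $j$ boxes. The $j$ boxes $i_1,\ldots,i_j$ occupy $j$ distinct rounds, so one of them is opened in some round $s\geq j$. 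Its deadline is then at least $j$, and by sortedness it is at most $d_{i_j}$. Hence $d_{i_j}\geq j$. For sufficiency, open the boxes in the sorted (EDF) order, so box $i_j$ is opened in round $j\leq d_{i_j}$, which is legal. Sets with ties in deadlines cause no trouble, since any tie-break gives the same condition.

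For the dynamic program, I will sort all $n$ boxes by nondecreasing deadline, breaking ties arbitrarily; say $1,\ldots,n$ after relabeling. By the characterization, a set $S$ is feasible exactly when, scanning boxes in this order, each selected box is the $j$-th selected with $j\leq$ its deadline. Define $F(m,j)$ as the maximum total weight of a subset of the first $m$ boxes of size exactly $j$ satisfying the condition, with value $-\infty$ if none exists, and $F(0,0)=0$. The recurrence is
\[
F(m,j)=\max\bigl\{F(m-1,j),\; F(m-1,j-1)+a_m \text{ if } j\leq d_m\bigr\}.
\]
Correctness follows because the rank of box $m$ within a chosen set of the first $m$ sorted boxes equals the number selected so far, so the condition decomposes prefix by prefix. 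The answer is $\max_j F(n,j)$, and a maximizing set is recovered by backtracking. The indices satisfy $m\le n$ and $j\leq n$, so there are $O(n^2)$ states with $O(1)$ transitions each, plus $O(n\log n)$ for sorting.

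The only subtle point is the counting step in the necessity direction, namely pigeonholing on the $j$ earliest-deadline boxes rather than the latest. I expect the rest to be routine. I also note that nonnegativity of the weights is not even needed for the DP, although it guarantees that the empty set is never strictly better.
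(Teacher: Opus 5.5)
Your proof is correct and follows the paper's own argument. It uses the same pigeonhole on the $j$ earliest-deadline boxes for necessity (the paper phrases it as all of $i_1,\ldots,i_j$ being opened within the first $d_{i_j}$ rounds), the same earliest-deadline order for sufficiency, and the identical $O(n^2)$ dynamic program over prefixes and cardinalities. In a final write-up, delete the abandoned first attempt with the boxes $i_j,\ldots,i_k$: as written it is an unfinished argument, not part of the proof.
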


\begin{proof}
If $d_{i_j}\geq j$ for every $j$, opening the boxes in deadline order is
legal. Conversely, in any legal schedule the $j$ boxes with the smallest
deadlines must all be opened within the first $d_{i_j}$ rounds. Since they
occupy distinct rounds, $d_{i_j}\geq j$.

Sort all boxes so that $d_1\leq\cdots\leq d_n$. Let $D(i,j)$ be the maximum
weight of a feasible set of exactly $j$ boxes chosen from the first $i$. Set
$D(0,0)=0$ and $D(0,j)=-\infty$ for $j>0$. Skipping box $i$ gives
$D(i-1,j)$. Taking it as the $j$th selected box is allowed only if $d_i\geq j$
and gives $D(i-1,j-1)+a_i$. Thus
\[
 D(i,j)=\max\{D(i-1,j),D(i-1,j-1)+a_i\}\quad\text{if }d_i\geq j.
\]
If $d_i<j$, only the skip term is allowed; when $j=0$, the take term is omitted. The optimum is
$\max_{0\leq j\leq n}D(n,j)$. There are $O(n^2)$ states and constant work per
transition.
\end{proof}

Applying the dynamic program with $a_i=w_i(t)$ computes $W(t)$ and a set maximizing $\mathcal{B}_t(S)$ with $O(n^2)$ state transitions.

\subsection{Immediate Commitment via a Balanced Threshold}
We next analyze a simple threshold policy with immediate commitment.
We then choose a balancing threshold and a feasible set to obtain the $1/2$ guarantee.

For a feasible set $S$ and a threshold $t\geq 0$, let
$S_t=\{i\in S:w_i(t)>0\}$. The policy $T^+(S,t)$ opens the boxes in $S_t$ in
nondecreasing deadline order. It immediately accepts the first box whose value
exceeds $t$ and stops, irrevocably rejecting every earlier box.
If no value exceeds $t$, it accepts the last box immediately upon opening it;
if $S_t$ is empty, it stops without opening or selecting a box.

\begin{lemma}[Threshold-surplus lower bound]
For every $S\in\D$ and $t\geq 0$,
\[
 u(T^+(S,t))\geq \min\!\left\{t,\sum_{i\in S}w_i(t)\right\}.
\]
\end{lemma}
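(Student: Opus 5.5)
The plan is to reduce the net utility of $T^+(S,t)$ to the expected \emph{capped} value of the accepted box. The opening costs should cancel exactly against the excess terms $o_i$, as in the proof of the adaptive-surplus upper bound. After that, the classical threshold--surplus argument of the prophet inequality applies.

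\textbf{Step 1 (cost cancellation).} Write $I_i$ and $A_i$ for the opening and acceptance indicators, as in the proof of the adaptive-surplus upper bound, and recall $v_i=y_i+o_i$. Then
\[
u(T^+(S,t))=\E\Bigl[\sum_i A_i y_i\Bigr]+\E\Bigl[\sum_i A_i o_i\Bigr]-\E\Bigl[\sum_i I_i c_i\Bigr].
\]
The key observation is that every opened box lies in $S_t$, so $w_i(t)>0$. Since $y_i\leq r_i$, this forces $r_i>t$.

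Hence whenever $o_i>0$ we have $v_i>r_i>t$. If such a box is opened, the policy accepts it, so $A_i o_i=I_i o_i$ pointwise. The decision to open box $i$ depends only on earlier boxes, so $I_i$ is independent of $v_i$. Using $\E[o_i]=c_i$, we get $\E[I_i o_i]=\E[I_i]c_i$, and the last two terms cancel. This leaves $u(T^+(S,t))=\E[y_C]$, with $y_0=0$.

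\textbf{Step 2 (threshold--surplus decomposition).} Because $r_i>t$ for every $i\in S_t$, the event $v_i>t$ coincides with $y_i>t$. So the policy accepts exactly the first box in $S_t$ with $y_i>t$, or else the last box, whose $y\geq 0$.

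Let $q=\Pr[y_i\leq t \text{ for all } i\in S_t]$. The accepted capped value then satisfies, pointwise,
\[
y_C\geq t\cdot\mathbf 1[\text{some } y_i>t]+\sum_{i\in S_t}\mathbf 1[\text{$i$ is reached and } y_i>t]\,(y_i-t)^+ .
\]
Only one term of the sum can be nonzero, namely the accepted box. Dropping the indicator $y_i>t$ from the sum is harmless, since $(y_i-t)^+=0$ otherwise.

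\textbf{Step 3 (independence and conclusion).} Box $i$ is reached iff every earlier box in $S_t$ has $y_j\leq t$. This event is independent of $y_i$ and has probability at least $q$. Taking expectations gives
\[
\E[y_C]\geq t(1-q)+q\sum_{i\in S_t}w_i(t)=t(1-q)+q\sum_{i\in S}w_i(t),
\]
since $w_i(t)=0$ for $i\in S\setminus S_t$. The right-hand side is a convex combination of $t$ and $\sum_{i\in S}w_i(t)$, so it is at least their minimum.

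The empty case $S_t=\emptyset$ is trivial: the utility is $0$ and $\sum_{i\in S}w_i(t)=0$.

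\textbf{Main obstacle.} The delicate point is Step 1. Immediate commitment could in principle lose the excess $o_i$ on a rejected box while still paying its cost $c_i$. This is ruled out only by the restriction to $S_t$, which guarantees $r_i>t$, so every box with positive excess is necessarily accepted.
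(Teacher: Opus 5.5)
Your proof is correct and follows essentially the same route as the paper: restrict to $S_t$ so that $r_i>t$, cancel the opening costs against the overflow $o_i$ via the reservation-value identity (overflow occurs only above $t$, so it is always collected), and finish with the threshold--surplus bound, using that each box is reached with probability at least the probability $q$ that no box exceeds $t$. The only difference is cosmetic. You perform the cancellation globally, which gives the exact identity $u(T^+(S,t))=\E[y_C]$ before decomposing, whereas the paper cancels position by position inside $\sum_j q_j\bigl(\E[v_{i_j}\mathbf{1}_{\{v_{i_j}>t\}}]-c_{i_j}\bigr)$.
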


\begin{proof}
Fix any $S\in\D$ and $t\geq 0$.
Remove from $S$ the boxes with zero surplus at threshold $t$, and order the remaining boxes by nondecreasing deadline as $i_1,\ldots,i_m$. 
Since $w_{i_j}(t)>0$, we have $r_{i_j}>t$.
Therefore, $y_{i_j}>t$ if and only if $v_{i_j}>t$.

For each position $j$, let $p_j=\Pr(y_{i_j}>t)$ denote the probability that box $i_j$ exceeds the threshold. 
Let $q_j=\prod_{\ell<j}(1-p_\ell)$ denote the probability that the policy reaches and opens box $i_j$.

If box $i_j$ is the first box whose value exceeds $t$, the policy stops after opening it. 
The deadline-feasible order ensures $d_{i_j}\geq j$, so the policy commits to this box by its deadline.
If no box exceeds $t$, the policy's reward is nonnegative, so we may omit that case when lower-bounding the expected reward.
The expected cost of opening box $i_j$ is $q_j \cdot c_{i_j}$. Summing the rewards from threshold crossings and subtracting the expected opening costs gives
\[
u\bigl(T^+(S,t)\bigr)
\geq
\sum_{j=1}^m q_j \cdot
\left(
\mathbb{E}\!\left[v_{i_j} \cdot \mathbf{1}_{\{v_{i_j}>t\}}\right]
-c_{i_j}
\right).
\]

The reservation-value identity lets us express each box's net contribution in terms of $w_{i_j}(t)$.
Since $r_{i_j}>t$, the overflow $o_{i_j}$ is zero whenever $v_{i_j}\leq t$, and $c_{i_j}=\mathbb{E}[o_{i_j}]$. 
Using $v_{i_j}=y_{i_j}+o_{i_j}$ and $y_{i_j}>t$ if and only if $v_{i_j}>t$, we get
\begin{align*}
&
\mathbb{E}\!\left[v_{i_j} \cdot \mathbf{1}_{\{v_{i_j}>t\}}\right]-c_{i_j} \\
= \, &
\mathbb{E}\!\left[y_{i_j} \cdot \mathbf{1}_{\{y_{i_j}>t\}}\right]
+\mathbb{E}\!\left[o_{i_j} \cdot \mathbf{1}_{\{v_{i_j}>t\}}\right]
-\mathbb{E}[o_{i_j}]\\
= \, &
\mathbb{E}\!\left[y_{i_j} \cdot \mathbf{1}_{\{y_{i_j}>t\}}\right] \\
= \, &
t\cdot p_j+\mathbb{E}\!\left[(y_{i_j}-t)^+\right] \\
= \, &
t\cdot p_j+w_{i_j}(t).
\end{align*}
The second equality uses that $o_{i_j}$ can be positive only when $v_{i_j}>t$; the third separates the threshold contribution from the excess above $t$.

Let $p=1-\prod_{j=1}^m(1-p_j)$ be the probability that at least one box exceeds $t$. 
Substituting the identity above gives
\[
u\bigl(T^+(S,t)\bigr)
\geq
t\,p+\sum_{j=1}^m q_j w_{i_j}(t).
\]
For every $j$, $q_j=\prod_{\ell<j}(1-p_\ell)\geq\prod_{\ell=1}^m(1-p_\ell)=1-p$. Therefore,
\[
u\bigl(T^+(S,t)\bigr)
\geq
t\,p+(1-p)\sum_{j=1}^m w_{i_j}(t)
\geq
\min\!\left\{t,\sum_{i\in S}w_i(t)\right\}.
\]
The middle expression is a convex combination of $t$ and $\sum_{i\in S}w_i(t)$.
Since the removed boxes have zero surplus, this proves the lemma.
\end{proof}

Combining the upper bound $t+W(t)$ on $\OPT$ with the threshold-surplus lower bound $\min\{t,W(t)\}$, we show that balancing the threshold and surplus terms yields the $1/2$ guarantee.
First, find a threshold $t^*$ satisfying $W(t^*)=t^*$.
Next, apply the dynamic program with weights $w_i(t^*)$ to find a feasible set $S^*$ maximizing their sum.
Finally, discard boxes with zero surplus and run $\mathrm{ALG}=T^+(S^*,t^*)$ in nondecreasing deadline order.
We state the guarantee below and discuss the computation of $t^*$ in its proof.

\begin{theorem}[$1/2$-approximation with immediate commitment]
\label{thm:half-approximation}
For Pandora's problem with deadlines, there exists a partially adaptive algorithm $\mathrm{ALG}$ with immediate commitment such that
\[
 u(\mathrm{ALG})\geq\frac{1}{2}\OPT,
\]
where $\OPT$ is defined among fully adaptive strategies with deferred-until-stopping commitment. 
\end{theorem}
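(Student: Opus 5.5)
The plan is to combine the Adaptive-surplus upper bound with the Threshold-surplus lower bound at a balancing threshold. For every $t\geq 0$ the first lemma gives $\OPT\leq t+W(t)$. The second lemma, applied to a maximizer $S_t\in\D$ of $\sum_{i\in S}w_i(t)$, gives $u(T^+(S_t,t))\geq\min\{t,W(t)\}$. If we can find $t^*$ with $W(t^*)=t^*$, then
\[
u(\mathrm{ALG})\geq t^*=\tfrac12\bigl(t^*+W(t^*)\bigr)\geq\tfrac12\OPT .
\]
The algorithm $T^+(S^*,t^*)$ fixes its opening list (the boxes of $S^*$ with positive surplus, in deadline order) before seeing any values, and it decides accept or reject at each opening. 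It is therefore partially adaptive with immediate commitment. Feasibility of the schedule follows from the Feasibility test lemma, since a subset of a feasible set is feasible.

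The main work is showing that a balancing threshold exists. Each $w_i(t)=\E[(y_i-t)^+]$ is continuous, convex, and nonincreasing in $t$ on $[0,\infty)$. Continuity uses $|w_i(t)-w_i(s)|\leq|t-s|$, and if $r_i=+\infty$ (so $c_i=0$) it also uses the finite mean of $v_i$. Hence $W(t)$, a maximum over finitely many sets of continuous nonincreasing functions, is continuous and nonincreasing. Define $g(t)=W(t)-t$. Then $g(0)=W(0)\geq 0$, and $g$ is continuous and strictly decreasing with $g(t)\to-\infty$. By the intermediate value theorem there is a unique $t^*\geq 0$ with $g(t^*)=0$.

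There is a degenerate case where $W(0)=0$, so $t^*=0$. Then every box has $y_i=0$ almost surely, and the upper bound gives $\OPT\leq 0$. The empty strategy achieves $0$, so the bound holds trivially; this is consistent with the formula above.

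To settle the existence claim, computation is not essential. For completeness I would note the following. On each piece between consecutive atoms or breakpoints of the distributions, $W$ is the maximum of finitely many explicit functions, so $t^*$ can be located by bisection on the monotone function $g$ to any desired precision. For discrete distributions with finite support, $t^*$ can be found exactly by checking each interval between consecutive support points or reservation values, where each $w_i$ is affine. On such an interval, $W$ is a maximum of affine functions, and solving $W(t)=t$ reduces to running the DP at the interval endpoints and solving a linear equation.

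If only an approximate $t^*$ is available, one would lose an arbitrarily small additive amount. Since the theorem claims existence, the exact $t^*$ suffices. I expect this computational discussion to be the only delicate point; the inequality itself is immediate from the two lemmas.
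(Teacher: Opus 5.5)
Your proposal is correct and follows the paper's proof essentially step for step. You establish existence of the balancing threshold $t^*$ from continuity of $W$ and strict monotonicity of $W(t)-t$, handle the degenerate case $W(0)=0$, and then combine $u(\mathrm{ALG})\geq\min\{t^*,W(t^*)\}=t^*$ with $\OPT\leq t^*+W(t^*)=2t^*$.

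One caveat concerns your explicitly non-essential computational aside. On an interval where every $w_i$ is affine, $W$ is still an upper envelope of many affine functions and can have interior breakpoints. The set active at $t^*$ may differ from both endpoint maximizers, so running the DP only at the two endpoints and solving one linear equation need not yield $t^*$. An iterative update, or the paper's bisection or LP formulation, is needed there.
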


\begin{proof}
We first verify that the construction is well defined.
Each function $w_i(t)=\E[\pos{y_i-t}]$ is continuous and nonincreasing, and converges to zero as $t\to\infty$.
As a maximum of finitely many sums of such functions, $W$ is also continuous and nonincreasing, with $W(t)\to 0$.
Hence $W(t)-t$ is continuous and strictly decreasing.
If $W(0)=0$, its unique zero is $t^*=0$.
The adaptive-surplus upper bound at $t=0$ then gives $\OPT\leq 0$, so stopping immediately has optimal utility zero and proves the theorem in this case.

Suppose now that $W(0)>0$.
Since $W(t)-t$ is positive at zero and tends to $-\infty$, it has a unique zero $t^*>0$.
A maximizing set $S^*$ exists because $\D$ is finite.
The threshold, selected set, and opening order are fixed before any values are observed, while the stopping time depends on the observations. Thus, $\mathrm{ALG}$ is partially adaptive.
The boxes retained by $T^+(S^*,t^*)$ form a subset of $S^*$, so their deadline order is feasible.
The algorithm makes an irrevocable acceptance or rejection decision immediately upon opening each box and stops upon acceptance.
Thus, whenever the algorithm commits to a box, it does so by that box's deadline.

By the threshold-surplus lower bound and the choice of $S^*$,
\[
 u(\mathrm{ALG})=u(T^+(S^*,t^*))
 \geq\min\{t^*,W(t^*)\}=t^*.
\]
The adaptive-surplus upper bound applies to every benchmark strategy, even when the box to which it commits has expired by the stopping time.
Taking the supremum over these strategies gives
\[
 \OPT\leq t^*+W(t^*)=2t^*.
\]
Combining the two bounds yields $u(\mathrm{ALG})\geq\frac12\OPT$, as claimed.

We next discuss how to compute $t^*$.
In general, we can first evaluate $W(0)$ and then apply binary search on $[0,W(0)]$.
Since $W$ is $n$-Lipschitz and $t^*\geq W(0)/(n+1)$,
$O(\log(n/\varepsilon))$ iterations suffice for any $0<\varepsilon<1/2$.
The resulting threshold gives an approximation ratio of $1/2-\varepsilon$, assuming exact evaluations of $W$.

For explicitly represented finite-support distributions, we can instead formulate the exact computation of $t^*$ as a linear program.
Let $y_{ik}$ be the capped values of box $i$ with probabilities $p_{ik}$, for $k=1,\ldots,m_i$, and let $M=\sum_i m_i$.
Let $J_i$ be the set of reachable cardinalities after the first $i$ boxes in the dynamic program. Introduce variables $t$ and $z_{ik}$, and state variables $L_{i,j}$ for $j\in J_i$. Write $a_i(z)=\sum_k p_{ik}z_{ik}$.
Using the same deadline order and transitions as in the dynamic program, consider
\[
\begin{aligned}
 \min\quad &t\\
 \text{s.t.}\quad
 &t\geq 0,\\
 &z_{ik}\geq y_{ik}-t,\quad z_{ik}\geq 0
     &&\quad i\in [n],\, k=1,\ldots,m_i,\\
 &L_{0,0}=0,\\
 &L_{i,j}\geq L_{i-1,j}
     &&\quad i\in [n],\, j\in J_{i-1},\\
 &L_{i,j}\geq L_{i-1,j-1}+a_i(z)
     &&\quad i\in [n],\, j-1\in J_{i-1},\, j\leq d_i,\\
 &L_{n,j}\leq t
     &&\quad j\in J_n.
\end{aligned}
\]
The optimal value of this LP is exactly $t^*$, and it has $O(M+n^2)$ variables and constraints.
For rational support values, probabilities, and costs encoded in binary, it can be solved exactly in polynomial time in the input length.
Running the dynamic program at $t^*$ then recovers $S^*$ and gives the exact $1/2$ guarantee.
\end{proof}

\begin{theorem}[Tightness of the $1/2$ guarantee]
\label{thm:immediate-commitment-tightness}
The factor $1/2$ is optimal among all immediate-commitment policies, even fully adaptive ones, against the best partially adaptive strategy with deferred-until-stopping commitment.
\end{theorem}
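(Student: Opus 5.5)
We will exhibit a family of instances on which every immediate-commitment policy, even a fully adaptive one, earns at most about half of what a partially adaptive strategy with deferred-until-stopping commitment earns. The natural source is the classical tight instance for the prophet inequality. Costs will be zero, so Pandora's structure adds nothing. The deadlines will force the boxes to be opened in a fixed order, so the commitment rule is the only thing that differs between the two sides.

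Concretely, take two boxes with zero opening costs and deadlines $d_1=1$, $d_2=2$. Box $1$ has the deterministic value $v_1=1$. Box $2$ has $v_2=1/\varepsilon$ with probability $\varepsilon$ and $v_2=0$ otherwise. Since $c_i=0$, the reservation values are $r_i=\operatorname{ess\,sup}v_i$ and play no role.

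First we bound every immediate-commitment policy. Box $1$ can only be opened in round $1$. If a policy ever opens box $2$ first, box $1$ becomes unopenable, so the policy can obtain at most $\E[v_2]=1$. If it opens box $1$ first, it must accept or reject immediately. Accepting yields $1$. Rejecting leaves only box $2$, which yields at most $\E[v_2]=1$. Randomization over these options does not help. Hence every immediate-commitment policy, adaptive or not, has utility at most $1$.

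Next we lower-bound the benchmark. The partially adaptive strategy opens box $1$ in round $1$ and box $2$ in round $2$, then commits to the better of the two. Under deferred-until-stopping commitment, box $1$ remains selectable after its deadline has passed. Its utility is
\[
\E[\max\{v_1,v_2\}]=\varepsilon\cdot\tfrac1\varepsilon+(1-\varepsilon)\cdot 1=2-\varepsilon .
\]
The ratio is therefore at most $1/(2-\varepsilon)$, which tends to $1/2$ as $\varepsilon\to0$. Since no factor better than $1/2$ can be guaranteed, and Theorem~\ref{thm:half-approximation} achieves $1/2$, the factor is optimal.

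The step that needs the most care is the immediate-commitment upper bound in full generality. A fully adaptive policy could skip box $1$, or use internal randomness, so the case analysis has to be exhaustive. It is: box $1$ is either never opened, opened and accepted, or opened and rejected. In each branch the continuation value is at most $1$ because $v_2$ is independent with mean $1$. It also helps that there are no costs, so opening is never penalized and the bound cannot be evaded through cost savings.
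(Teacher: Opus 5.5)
Your proposal is correct and follows the paper's proof essentially verbatim: it uses the same two zero-cost boxes ($d_1=1$, $v_1=1$; $d_2=2$, $v_2\in\{0,1/\varepsilon\}$), the same benchmark value $2-\varepsilon$ from opening both boxes and keeping the larger value, and the same case analysis bounding every immediate-commitment policy by $1$. Your explicit handling of skipping round $1$ and of internal randomization spells out cases that the paper's proof leaves implicit.
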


\begin{proof}
Tightness follows by simulating the classical tight example for prophet inequalities.
Consider two zero-cost boxes with $d_1=1$, $v_1=1$, $d_2=2$, and $v_2=1/\varepsilon$ with probability $\varepsilon$ and $0$ otherwise, where $0<\varepsilon<1$.
A partially adaptive strategy with deferred-until-stopping commitment opens box 1 and then box 2 and chooses the larger value, obtaining expected utility $2-\varepsilon$.
Any immediate-commitment policy has expected utility at most $1$: opening box 2 first precludes opening box 1 later, while opening box 1 first offers a choice between accepting $1$ and continuing to box 2, whose expectation is also $1$.
Thus, the ratio is at most $1/(2-\varepsilon)\to1/2$, proving tightness even for fully adaptive immediate-commitment policies.
\end{proof}

\section{Hardness of Partial Adaptivity}
\label{sec:eg}

Section \ref{sec:0.5} shows that a partially adaptive policy can obtain a $1/2$-approximation even under immediate commitment. 
We now show that partial adaptivity itself has an inherent limitation: even under deferred-until-stopping commitment, its worst-case approximation ratio relative to full adaptivity is bounded away from one.

Let $\mathrm{OPT}_{\mathrm{PA}}(\mathcal I)$ and $\mathrm{OPT}_{\mathrm{FA}}(\mathcal I)$ denote, respectively, the supremum expected utilities of partially adaptive and fully adaptive policies on instance $\mathcal I$, both under deferred-until-stopping commitment.

\begin{theorem}[Limitation of partial adaptivity]
\label{thm:partial-adaptivity-limit}
There exists an instance $\mathcal I$ of Pandora’s Problem with Deadlines such that
\[
\mathrm{OPT}_{\mathrm{PA}}(\mathcal I)
<
0.778 \cdot \mathrm{OPT}_{\mathrm{FA}}(\mathcal I).
\]
Thus, no partially adaptive policy, even with unlimited computational power, can guarantee a worst-case approximation ratio greater than $0.778$ against the fully adaptive optimum.
\end{theorem}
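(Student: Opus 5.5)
The plan is to exhibit an explicit small instance with a uniform deadline $d$, so that every strategy opens at most $d$ boxes and deadlines act only as a budget on the number of openings. Under deferred-until-stopping commitment, a partially adaptive policy is then a fixed list of at most $d$ boxes with a stopping rule. For a fixed list, the optimal stopping rule is the classical Weitzman/optimal-stopping computation restricted to that prefix. $\mathrm{OPT}_{\mathrm{PA}}(\mathcal I)$ is therefore a maximum over finitely many lists, each evaluated exactly by backward induction. $\mathrm{OPT}_{\mathrm{FA}}(\mathcal I)$ is also a finite backward induction, over (set of opened boxes, best value so far), and a lower bound from one explicit adaptive policy suffices. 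The proof thus reduces to (i) designing the gadget, (ii) computing the partially adaptive optimum for every list, and (iii) exhibiting a good adaptive policy.

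The gadget I would start from uses $d=2$ and zero costs, so that stopping is trivial and a list's value is just $\E[\max]$ of its boxes. The boxes are:
\begin{itemize}
\item a probe box $1$ with $v_1=a$ w.p.\ $p$ and $0$ otherwise;
\item a safe box $S$ with $v_S=1$ deterministically;
\item a risky box $R$ with $v_R=X$ w.p.\ $q$ and $0$ otherwise.
\end{itemize}
The adaptive policy opens box $1$ and then opens $S$ if $v_1=0$ and $R$ if $v_1=a$. This already separates the two models. Set $u=pa$ and $b=qX$ and let $q\to 0$. The adaptive value is then $1-p+u+pb$, while the lists $(1,S)$, $(1,R)$, $(S,R)$ give $1-p+u$, $b+u$ and $1+b$. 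Equalizing these gives $u=1$, $b=1-p$, and a ratio of $(2-p)/(2-p^2)$. This is minimized near $p\approx0.586$ at about $0.854$. So the skeleton works but is too weak, and the instance has to be enriched.

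The enrichment step is the main obstacle: pushing the ratio below $0.778$. I would try, in order:
\begin{enumerate}
\item a longer uniform deadline with a multi-valued probe box, whose outcome selects among several specialised follow-up boxes, so that each fixed list can serve only one branch well;
\item positive opening costs on the follow-up boxes, so that a partially adaptive list cannot hedge cheaply by including a box it usually will not need;
\item recursive probing (probe, then a second probe, then a payoff box), which multiplies the mismatch.
\end{enumerate}
For each candidate family I would write $\mathrm{OPT}_{\mathrm{PA}}$ as a maximum of closed-form expressions in the parameters and $\mathrm{OPT}_{\mathrm{FA}}$ as the value of the intended adaptive tree. I would then numerically minimize the max-ratio (a small minimax problem, effectively equalizing the best lists as above). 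Once parameters are found, I would fix them to explicit rationals. The final write-up then lists every list of length at most $d$ with its exact value (pruning dominated lists, for example by noting that costly boxes appended after the payoff box never help). It computes the adaptive policy's value, which lower-bounds $\mathrm{OPT}_{\mathrm{FA}}$, and checks the strict inequality $\mathrm{OPT}_{\mathrm{PA}}<0.778\cdot\mathrm{OPT}_{\mathrm{FA}}$ by direct arithmetic. The concluding sentence of the statement then follows immediately from this single instance.
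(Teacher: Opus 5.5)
\textbf{There is a genuine gap.} The statement is a pure existence claim, so its entire content is an explicit instance with ratio below $0.778$, and your proposal never produces one. The only instance you actually analyze (deadline $2$, probe/safe/risky boxes) bottoms out at $(2-p)/(2-p^2)=(2+\sqrt2)/4\approx 0.854$ along your equalized parameters, and you note yourself that this is too weak. The three enrichments (multi-valued probes, positive costs, recursive probing) are only things to try. You give no computation or argument that any of them reaches $0.778$, and there is no a priori reason a small-deadline search will succeed. ``Numerically minimize, then fix rationals'' is a search procedure, not a proof.

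\textbf{The missing idea.} The paper's construction needs a long uniform deadline $k\to\infty$, zero costs, and many copies of two box types, with the probe box and the fallback box merged into one type:
\begin{itemize}
\item type-1 boxes have value $1$ with probability $1/(\lambda_1 k)$;
\item type-2 boxes have value $k$ with probability $1/(\lambda_2 k^2)$.
\end{itemize}
Your trick of letting $q\to 0$ with $qX$ fixed is exactly the type-2 scaling, so you were close in spirit.

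The fully adaptive policy opens type-1 boxes until the first success, which is a random stopping time. It then spends every remaining round on type-2 boxes. A partially adaptive policy must fix the split $\alpha k$ in advance. In the limit, the partially adaptive optimum is
\[
\max_{\alpha\in[0,1]}\Bigl\{1-e^{-\alpha/\lambda_1}+\frac{1-\alpha}{\lambda_2}\Bigr\}.
\]
By a Riemann-sum computation, the adaptive policy achieves
\[
1-e^{-1/\lambda_1}+\frac{1-\lambda_1\bigl(1-e^{-1/\lambda_1}\bigr)}{\lambda_2}.
\]
With $\lambda_1=0.513$ and $\lambda_2=1.541$ the limiting ratio is about $0.7779$. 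Uniform convergence then yields a finite $k$ with strict inequality.

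The gain over your gadget comes from the adaptive policy's freedom to decide how many rounds to spend securing the fallback. In your two-round gadget that number is fixed at one, so the gap is capped.
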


\begin{proof}
Fix a sufficiently large integer $k$. 
Construct an instance with at least $k$ boxes of each of two types. 
Every box has deadline $k$ and opening cost zero, so at most $k$ boxes can be opened.
All box values are mutually independent. A type-1 box has value $X_1$, and a type-2 box has value $X_2$, where
\[
X_1 =
\begin{cases}
1 & \text{with probability } p_1,\\
0 & \text{otherwise,}
\end{cases}
\qquad
X_2 =
\begin{cases}
k & \text{with probability } p_2,\\
0 & \text{otherwise.}
\end{cases}
\]
Choose fixed constants $0<\lambda_1\leq \lambda_2$ and set
\[
p_1=\frac{1}{\lambda_1 k},
\qquad
p_2=\frac{1}{\lambda_2 k^2}.
\]
For sufficiently large $k$, both are valid probabilities. 
Since opening is free and all opened boxes remain available until stopping, the utility is the largest value among the opened boxes.

\paragraph{Optimal partially adaptive strategy.}
A partially adaptive strategy fixes its opening order in advance. 
Because opening another box is free and cannot decrease the maximum observed value, an optimal such strategy opens exactly $k$ boxes. 
If it opens $\ell$ type-1 boxes and $k-\ell$ type-2 boxes, its expected utility is
\[
\bigl[1-(1-p_1)^\ell\bigr](1-p_2)^{k-\ell}
+
k\bigl[1-(1-p_2)^{k-\ell}\bigr].
\]
The first term is the probability of seeing a type-1 value of $1$ and no type-2 value of $k$; the second is the expected contribution when at least one type-2 box yields $k$. 
Thus,
\[
\begin{aligned}
\mathrm{OPT}_{\mathrm{PA}}^{(k)}
= \max_{\ell\in\{0,\ldots,k\}} \Biggl\{
&\left[1-\left(1-\frac{1}{\lambda_1 k}\right)^\ell\right]
 \left(1-\frac{1}{\lambda_2 k^2}\right)^{k-\ell}
+ k\left[1-\left(1-\frac{1}{\lambda_2 k^2}\right)^{k-\ell}\right]
\Biggr\}.
\end{aligned}
\]
Writing $\alpha=\ell/k$, the expression inside the maximum converges uniformly over $\ell$ to
\[
g(\alpha)=1-e^{-\alpha/\lambda_1}+\frac{1-\alpha}{\lambda_2}.
\]
Consequently, when $k\to\infty$,
\[
\mathrm{OPT}_{\mathrm{PA}}^{(k)}
=
\max_{\alpha\in[0,1]} g(\alpha)+o(1).
\]
For our choice of constants, $g$ is maximized at
$\alpha^*=\lambda_1\ln(\lambda_2/\lambda_1)$, which lies in $[0,1]$. Hence,
\[
\mathrm{OPT}_{\mathrm{PA}}^{(k)}
=
1-\frac{\lambda_1}{\lambda_2}
+\frac{1-\lambda_1\ln(\lambda_2/\lambda_1)}{\lambda_2}
+o(1).
\]

\paragraph{A fully adaptive strategy.}
Consider the following fully adaptive policy. 
It opens type-1 boxes until it observes value $1$ or uses all $k$ rounds. 
If it observes $1$ after $t$ type-1 failures, it opens type-2 boxes in each of the remaining $k-t-1$ rounds and, at stopping, selects the highest-valued opened box. 
This is a feasible fully adaptive policy; we do not require it to be optimal\footnote{In fact, this fully adaptive policy is optimal for the constructed instance. 
This can be verified by backward induction on the number of rounds remaining, comparing the expected continuation values of opening each type of box after every possible history.}.

Its expected utility is
\[
U_{\mathrm{FA}}^{(k)}
=
1-(1-p_1)^k
+
(k-1)\sum_{t=0}^{k-1}
(1-p_1)^t p_1
\left[1-(1-p_2)^{k-t-1}\right].
\]
Here $(1-p_1)^t p_1$ is the probability that the first type-1 success occurs after exactly $t$ failures. 
Conditional on that event, the bracketed term is the probability that one of the remaining type-2 boxes yields value $k$, upgrading the reward from $1$ to $k$.

As $k\to\infty$, the sum is a Riemann sum, and
\[
U_{\mathrm{FA}}^{(k)}
=
1-e^{-1/\lambda_1}
+
\frac{1-\lambda_1(1-e^{-1/\lambda_1})}{\lambda_2}
+o(1).
\]

\paragraph{Approximation ratio.}

Since the above fully adaptive policy is feasible,
$\mathrm{OPT}_{\mathrm{FA}}^{(k)}\geq U_{\mathrm{FA}}^{(k)}$. 
Therefore,
\[
\frac{\mathrm{OPT}_{\mathrm{PA}}^{(k)}}
     {\mathrm{OPT}_{\mathrm{FA}}^{(k)}}
\leq
\frac{\mathrm{OPT}_{\mathrm{PA}}^{(k)}}{U_{\mathrm{FA}}^{(k)}}.
\]

Set $\lambda_1=0.513$ and $\lambda_2=1.541$. The ratio of
the limiting utilities is
\[
R=
\frac{
1-\lambda_1/\lambda_2+
\bigl(1-\lambda_1\ln(\lambda_2/\lambda_1)\bigr)/\lambda_2
}{
1-e^{-1/\lambda_1}+
\bigl(1-\lambda_1(1-e^{-1/\lambda_1})\bigr)/\lambda_2
}.
\]
For these rational parameter values, bounding the logarithm and exponential
(for example, by Taylor expansions with remainder) gives
\[
R<0.77792<0.778.
\]
Since
\[
\frac{\mathrm{OPT}_{\mathrm{PA}}^{(k)}}
     {\mathrm{OPT}_{\mathrm{FA}}^{(k)}}
\leq
\frac{\mathrm{OPT}_{\mathrm{PA}}^{(k)}}{U_{\mathrm{FA}}^{(k)}}
\longrightarrow R,
\]
the ratio is strictly less than $0.778$ for all sufficiently large finite
$k$. This gives the required instance and proves the theorem.
\end{proof}

\section{Conclusion}
We presented a simple partially adaptive algorithm that commits immediately and achieves a $1/2$-approximation against the fully adaptive optimum with deferred-until-stopping commitment.
Its analysis directly combines Weitzman's reservation-value idea with the threshold--surplus decomposition underlying prophet inequalities.
We also proved that the $1/2$ guarantee is tight for immediate-commitment policies against this benchmark, and that partially adaptive policies cannot guarantee a ratio above $0.778$ even under deferred-until-stopping commitment.

Although immediate commitment and partial adaptivity can be advantageous in practice, determining the tight approximation ratio for more general efficient algorithms remains of theoretical interest.
One direction is to exploit deferred-to-deadline commitment, which allows a policy to gather additional observations while retaining the option to select earlier boxes that have not expired.
Neither our algorithm nor the prior algorithms \citep{DBLP:conf/aaai/BergerEFF24,DBLP:journals/ai/AmanatidisBEFFRT25} exploit this flexibility.
Can it improve the approximation ratio beyond $1/2$?

A second direction is to exploit full adaptivity in the opening order.
In the classical Pandora's problem, Weitzman's partially adaptive rule is optimal even against fully adaptive policies, whereas our separation shows that deadlines make partial adaptivity strictly weaker.
Under a common deferred commitment rule, can efficient fully adaptive algorithms achieve better approximation guarantees than partially adaptive algorithms?

\section*{Declaration of AI Usage}
All the theoretical results in this paper, including the algorithm, its analysis, and the proofs, were developed without the assistance of AI tools. 
During the preparation of the manuscript, we used ChatGPT Astra to proofread and polish the writing. 
We carefully reviewed its presentation and take full responsibility for all the content.
\bibliographystyle{plainnat}
\bibliography{pandora}

@inproceedings{BeyhaghiK:EC:2019,
  title={Pandora's problem with nonobligatory inspection},
  author={Beyhaghi, Hedyeh and Kleinberg, Robert},
  booktitle={Conference on Economics and Computation, {EC}},
  pages={131--132},
  year={2019}
}

@article{DBLP:journals/sigecom/BeyhaghiC23,
  author       = {Hedyeh Beyhaghi and
                  Linda Cai},
  title        = {Recent Developments in Pandora's Box Problem: Variants and Applications},
  journal      = {SIGecom Exch.},
  volume       = {21},
  number       = {1},
  pages        = {20--34},
  year         = {2023}
}

@inproceedings{DBLP:conf/approx/0001GMT23,
  author       = {Shuchi Chawla and
                  Evangelia Gergatsouli and
                  Jeremy McMahan and
                  Christos Tzamos},
  title        = {Approximating Pandora's Box with Correlations},
  booktitle    = {{APPROX/RANDOM}},
  volume       = {275},
  pages        = {26:1--26:24},
  year         = {2023}
}

@inproceedings{GergatsouliT:NeurIPS:2023,
  author       = {Evangelia Gergatsouli and
                  Christos Tzamos},
  title        = {Weitzman's rule for {P}andora's box with correlations},
  booktitle    = {  Neural Information Processing Systems, NeurIPS},
  year         = {2023}
}

@inproceedings{ChawlaGTTZ:FOCS:2020,
  author       = {Shuchi Chawla and
                  Evangelia Gergatsouli and
                  Yifeng Teng and
                  Christos Tzamos and
                  Ruimin Zhang},
  title        = {Pandora's Box with Correlations: Learning and Approximation},
  booktitle    = {Foundations of Computer Science, {FOCS}},
  pages        = {1214--1225},
  year         = {2020}
}

@article{Weitzman/1979/original,
 author = {Martin L. Weitzman},
 journal = {Econometrica},
 number = {3},
 pages = {641--654},
 publisher = {[Wiley, Econometric Society]},
 title = {Optimal Search for the Best Alternative},
 urldate = {2024-06-28},
 volume = {47},
 year = {1979}
}

@inproceedings{DBLP:conf/soda/Singla18,
  author       = {Sahil Singla},
  title        = {The Price of Information in Combinatorial Optimization},
  booktitle    = {Symposium on discrete Algorithms, {SODA}},
  pages        = {2523--2532},
  year         = {2018}
}

@inproceedings{DBLP:conf/icml/GergatsouliT22,
  author       = {Evangelia Gergatsouli and
                  Christos Tzamos},
  title        = {Online Learning for Min Sum Set Cover and Pandora's Box},
  booktitle    = {{ICML}},
  series       = {Proceedings of Machine Learning Research},
  volume       = {162},
  pages        = {7382--7403},
  publisher    = {{PMLR}},
  year         = {2022}
}

@inproceedings{DBLP:conf/sigecom/KleinbergWW16,
  author       = {Robert D. Kleinberg and
                  Bo Waggoner and
                  E. Glen Weyl},
  title        = {Descending Price Optimally Coordinates Search},
  booktitle    = {Conference on Economics and Computation, {EC}},
  pages        = {23--24},
  year         = {2016}
}

@article{BoodaghiansFLL:Math.Oper:2023,
  author       = {Shant Boodaghians and
                  Federico Fusco and
                  Philip Lazos and
                  Stefano Leonardi},
  title        = {Pandora's Box Problem with Order Constraints},
  journal      = {Math. Oper. Res.},
  volume       = {48},
  number       = {1},
  pages        = {498--519},
  year         = {2023}
}

@inproceedings{BergerEFF:EC:2023,
  author       = {Ben Berger and
                  Tomer Ezra and
                  Michal Feldman and
                  Federico Fusco},
  title        = {Pandora's Problem with Combinatorial Cost},
  booktitle    = {{EC}},
  pages        = {273--292},
  publisher    = {{ACM}},
  year         = {2023}
}

@article{DBLP:journals/corr/abs-0804-1724,
  author       = {Sudipto Guha and
                  Kamesh Munagala and
                  Saswati Sarkar},
  title        = {Information Acquisition and Exploitation in Multichannel Wireless
                  Networks},
  journal      = {CoRR},
  volume       = {abs/0804.1724},
  year         = {2008}
}

@inproceedings{DBLP:conf/stoc/FuLL23,
  author       = {Hu Fu and
                  Jiawei Li and
                  Daogao Liu},
  title        = {Pandora Box Problem with Nonobligatory Inspection: Hardness and Approximation
                  Scheme},
  booktitle    = {Sympoium on Theory of Computing, {STOC}},
  pages        = {789--802},
  year         = {2023}
}

@inproceedings{DBLP:conf/focs/BansalHZ25,
  author       = {Nikhil Bansal and
                  Zhiyi Huang and
                  Zixuan Zhu},
  title        = {Optimal 4-Approximation for the Correlated Pandora's Problem},
  booktitle    = {{FOCS}},
  pages        = {1813--1828},
  publisher    = {{IEEE}},
  year         = {2025}
}

@inproceedings{DBLP:conf/stoc/BeyhaghiC23,
  author       = {Hedyeh Beyhaghi and
                  Linda Cai},
  title        = {Pandora's Problem with Nonobligatory Inspection: Optimal Structure
                  and a {PTAS}},
  booktitle    = {Symposium on Theory of Computing, {STOC}},
  pages        = {803--816},
  year         = {2023}
}

@article{DBLP:journals/jet/Doval18,
  author       = {Laura Doval},
  title        = {Whether or not to open Pandora's box},
  journal      = {J. Econ. Theory},
  volume       = {175},
  pages        = {127--158},
  year         = {2018}
}

@article{aouad2020pandora,
  author  = {Aouad, Ali and Ji, Jingwei and Shaposhnik, Yaron},
  title   = {The Pandora's Box Problem with Sequential Inspections},
  journal = {Operations Research},
  volume  = {74},
  number  = {4},
  pages   = {2179--2195},
  year    = {2026},
  doi     = {10.1287/opre.2024.0733}
}

@inproceedings{EsfandiariHLM:AAAI:2019,
  author       = {Hossein Esfandiari and
                  Mohammad Taghi Hajiaghayi and
                  Brendan Lucier and
                  Michael Mitzenmacher},
  title        = {Online Pandora's Boxes and Bandits},
  booktitle    = {{AAAI}},
  pages        = {1885--1892},
  publisher    = {{AAAI} Press},
  year         = {2019}
}

@inproceedings{DBLP:conf/aaai/BergerEFF24,
  author       = {Ben Berger and
                  Tomer Ezra and
                  Michal Feldman and
                  Federico Fusco},
  title        = {Pandora's Problem with Deadlines},
  booktitle    = {{AAAI}},
  pages        = {20337--20343},
  publisher    = {{AAAI} Press},
  year         = {2024}
}

@article{ZivL:corr:2024,
  author       = {Ziv Scully and
                  Laura Doval},
  title        = {Local hedging approximately solves Pandora's box problems with
                  nonobligatory inspection},
  journal      = {CoRR},
  volume       = {abs/2410.19011},
  year         = {2024}
}

@inproceedings{FuLX:ICALP:2018,
  title={A {PTAS} for a class of stochastic dynamic programs},
  author={Fu, Hao and Li, Jian and Xu, Pan},
  booktitle={ International Colloquium on Automata, Languages, and Programming, {ICALP}},
  pages={56},
  year={2018}
}

@inproceedings{DBLP:conf/sigecom/Segev021,
  author       = {Danny Segev and
                  Sahil Singla},
  title        = {Efficient Approximation Schemes for Stochastic Probing and Prophet
                  Problems},
  booktitle    = {Conference on Economics and Computation, {EC}},
  pages        = {793--794},
  year         = {2021}
}

@inproceedings{DBLP:conf/wine/AmanatidisFRT24,
  author       = {Georgios Amanatidis and
                  Federico Fusco and
                  Rebecca Reiffenh{\"{a}}user and
                  Artem Tsikiridis},
  title        = {Pandora's Box Problem Over Time},
  booktitle    = {{WINE}},
  series       = {Lecture Notes in Computer Science},
  volume       = {15534},
  pages        = {494--512},
  publisher    = {Springer},
  year         = {2024}
}

@article{KrengelSucheston1977,
  author  = {Krengel, Ulrich and Sucheston, Louis},
  title   = {Semiamarts and Finite Values},
  journal = {Bulletin of the American Mathematical Society},
  volume  = {83},
  number  = {4},
  pages   = {745--747},
  year    = {1977}
}

@article{KrengelSucheston1978,
  author  = {Krengel, Ulrich and Sucheston, Louis},
  title   = {On semiamarts, amarts, and processes with finite value},
  journal = {Adv. in Probability},
  volume  = {4},
  pages   = {197--266},
  year    = {1978}
}

@article{SamuelCahn1984,
  author  = {Samuel-Cahn, Ester},
  title   = {Comparison of Threshold Stop Rules and Maximum for Independent Nonnegative Random Variables},
  journal = {The Annals of Probability},
  volume  = {12},
  number  = {4},
  pages   = {1213--1216},
  year    = {1984}
}

@article{DBLP:journals/ai/AmanatidisBEFFRT25,
  author       = {Georgios Amanatidis and
                  Ben Berger and
                  Tomer Ezra and
                  Michal Feldman and
                  Federico Fusco and
                  Rebecca Reiffenh{\"{a}}user and
                  Artem Tsikiridis},
  title        = {Pandora's box problem with time constraints},
  journal      = {Artif. Intell.},
  volume       = {349},
  pages        = {104426},
  year         = {2025}
}


\end{document}